\documentclass{article}
\usepackage{natbib}

\usepackage{amsmath,amssymb,amsthm}
\newtheorem{theorem}{Theorem}
\newtheorem{corollary}[theorem]{Corollary}
\newtheorem{definition}[theorem]{Definition}
\newtheorem{lemma}[theorem]{Lemma}

\voffset = -1in  \hoffset = -1 in
\topmargin = 5 mm  \headheight = 10 mm  \headsep = 10 mm
\textheight = 235 mm  \textwidth = 160 mm  \topskip = 7 mm
\evensidemargin = 25 mm  \oddsidemargin = 25 mm  \footskip = 15 mm

\begin{document}
\title{Reichenbachian common cause systems of size 3 \\ in general probability theories}
\author{Yuichiro Kitajima}
\date{}
\maketitle

\begin{abstract}
Reichenbach defined a common cause which explains a correlation between two events if either one does not cause the other. Its intuitive idea is that the statistical ensemble can be divided into two disjoint parts so that the correlation disappears in both of the resulting subensembles if there is no causal connection between these correlated events. These subensembles can be regarded as common causes. Hofer-Szab\'{o} and R\'{e}dei (2004) generalized a Reichenbachian common cause, and called it a Reichenbachian common cause system. In the case of Reichenbachian common cause systems the statistical ensemble is divided more than two, while it is divided into two parts in the case of Reichenbachian common causes. The number of these subensembles is called the size of this system. In the present paper, we examine Reichenbachian common cause systems in general probability theories which include classical probability theories and quantum probability theories. It is shown that there is no Reichenbachian common cause system for any correlation between two events which are not logical independent, and that a general probability theory which is represented by an atomless orthomodular lattice with a faithful $\sigma$-additive probability measure contains Reichenbachian common cause systems of size 3 for any correlation between two compatible and logical independent events. Moreover, we discuss a relation between Bell's inequality and Reichenbachian common cause systems, and point out that this violation of Bell's inequality can be compatible with a Reichenbachian common cause system although it contradicts a `common' common cause system.

\end{abstract}

\section{Introduction}

\cite{reichenbach1956direction} defined a common cause which explain correlations between two events if either one does not cause the other. Its intuitive idea is that the statistical ensemble can be divided into two disjoint parts so that the correlation disappears in both of the resulting subensembles. These subensembles are regarded as common causes. Reichenbachain common cause principle states that a correlation between two events  is either due to a direct causal link between the correlated events, or there is a Reichenbachian common cause that explains the correlation.

\cite{gyenis2004can} introduced the notion of common cause closedness, which means that there always exists a Reichenbachian common cause for any correlation between two events if either one does not cause the other. In a common cause closed probability measure, Reichenbachian common cause principle holds. When a probability measure space is not common cause closed, there are two strategies to save this principle. One strategies is to think that such a probability measure space is not rich enough to contain a common cause, and it can be embedded into a larger one which contains a common cause of the correlation. Such a larger probability space is called common cause complete \cite[Chapter 3]{hofer2013principle}.

Another strategy is to suspect that the correlation is not due to two subensembles but the result of more than two subensembles. To formulate the latter idea, \cite{hofer2004reichenbachian, hofer2006reichenbachian} generalized a Reichenbachian common cause, and called it a Reichenbachian common cause system. The statistical ensemble is divided more than two in the case of Reichenbachian common cause systems while it is divided into two parts in the case of Reichenbachian common causes. The number of these subensembles is called the size of this system. 
A corresponding notion of common cause closedness can be introduced; a probability measure space is called causally $n$-closed if it contains Reichenbachian common cause systems whose subensembles are $n$ for any correlation between two causally independent events.

Common causes closedness can be defined in quantum probability theories as well as classical probability theories \citep{gyenis2014atomicity,kitajima2015characterizing}. By a quantum probability theories are meant orthomodular lattices of projections of a noncommutative von Neumann algebra with $\sigma$-additive probability measure on this lattice. \cite{kitajima2015characterizing} characterized common cause closedness of quantum probability theories. According to this result, a quantum probability space is common cause closed if and only if it has at most one measure theoretic atom.

The notion of Reichenbachian common cause systems also can be formulated  in quantum probability theories as well as common cause closedness, and a more general notion of causal $n$-closedness of quantum probability theories can be defined naturally. Recently \citet[Section 6]{wronski2014new} examine causal $n$-closedness in an orthomodular lattice which has atoms. 

Reichenbachian common cause systems are defined for the correlation between two events if either one does not cause the other. So it is important whether a causal connection between the two events exists or not. In orthomodular lattices, no causal connection is represented by logical independence \citep{redei1995logical, redei1995logically, redei1998quantum}. In the present paper, we examine Reichenbachian common cause systems in orthomodular lattices. Concretely speaking, the following two problems are investigated:
\begin{itemize}
\item Is logical independence necessary for the existence of a Reichenbachian common cause system?
\item Is there a general probability theory which is causal $n$-closed for some natural number $n$?
\end{itemize}
It is shown that there is no Reichenbachian common cause system for any correlation between two events if they are not logical independent (Theorem \ref{logical-independence-rccs-proposition}), and that a general probability theory which is represented by an atomless orthomodular lattice with a faithful $\sigma$-additive probability measure contains Reichenbachian common cause systems of size 3 for any correlation between two causally independent events (Theorem \ref{size-3}). In other words, any correlation in this probability theory can be explained by a Reichenbachian common cause system of size 3. 

A reason why an atomless orthomodular lattice is examined is that it can be applied to algebraic quantum field theory (AQFT). AQFT predicts many states which violate Bell's inequality between two events associated with space-like separated spacetime regions in Minkowski spacetime \citep{halvorson2000generic, kitajima2013epr, landau1987violation, summers1987bell, summers1987abell, summers1987maximal, summers1988maximal}. Does the existence of a Reichenbachian common cause system contradict this violation of Bell's inequality? Bell's inequality holds if it is assumed that there is a local hidden variable. This local hidden variable can be regarded as a `common' common cause. Thus, no `common' common cause exists if Bell's inequality does not hold. Because a Reichbachian common cause is different from `common' common cause, it can be compatible with the violation of Bell's inequality \citep{redei1997reichenbach, hofer1999reichenbach, redei2002local}. After Corollary \ref{corollary_size-3}, we discuss a relation between Bell's inequality and Reichenbachian common cause systems, and point out that this violation of Bell's inequality can be compatible with a Reichenbachian common cause system although it contradicts a `common' common cause system.


\section{Reichenbachian common cause systems}
An orthocomplemented lattice $\mathcal{L}$ with lattice operations $\vee$, $\wedge$, and orthocomplementation $\perp$ is called orthomodular if, for any $A, B \in \mathcal{L}$ such that $A \leq B$,
\begin{equation}
B=A \vee (A^{\perp} \wedge B).
\end{equation}
Throughout the paper $\mathcal{L}$ denotes an orthomodular lattice. An example of orthomodular lattices $\mathcal{L}$ is a Boolean algebra, which is distributive, i.e. if for any $A,B,C \in \mathcal{L}$
\[ A \vee (B \wedge C)=(A \vee B) \wedge (A \vee C). \]
Other examples of orthomodular lattices are the lattices of projections of a von Neumann algebra. The lattices of projections of a von Neumann algebra is distributive if and only if the von Neumann algebra is commutative. Generally an orthomodular lattice is not necessarily distributive.

The two elements $A, B \in \mathcal{L}$ be called compatible if 
\begin{equation}
\label{compatibility-equality}
A=(A \wedge B) \vee (A \wedge B^{\perp}). 
\end{equation}
Equation (\ref{compatibility-equality}) holds if and only if
\begin{equation}
B=(B \wedge A) \vee (B \wedge A^{\perp}).
\end{equation}
In other words, the compatibility relation is symmetric \cite[Theorem 3.2]{kalmbach1983orthomodular}. 
If $A \leq B$, then $A$ and $B$ are compatible because $B=A \vee (A^{\perp} \wedge B)=(B \wedge A) \vee (B \wedge A^{\perp})$.

In the present paper, it is assumed that orthomodular lattices are bounded: they have a smallest and a largest element denoted by $0$ and $1$, respectively. If for every countable subset $S$ of $\mathcal{L}$, the join and the meet of all elements in $S$ exist, then $\mathcal{L}$ is called a $\sigma$-complete orthomodular lattice. An orthomodular lattice is called atomless if, for any nonzero element $A \in \mathcal{L}$, there exists $B \in \mathcal{L}$ such that $0 < B < A$. For example, the orthomodular lattices of all projections on a Hilbert space is not atomless because one-dimensional projections are atoms in this case. On the other hand, the orthomodular lattices of projections of type II or type III von Neumann algebras are atomless orthomodular lattices. It is well known that typical local algebras in algebraic quantum field theory are of type III \cite[Section V.6]{haag1996local}, that is, the projection lattice of these algebras are atomless.

Let $\mathcal{L}$ be a $\sigma$-complete orthomodular lattice. Elements $A$ and $B$ in $\mathcal{L}$ are called mutually orthogonal if $A \leq B^{\perp}$. The map $\phi: \mathcal{L} \rightarrow [0,1]$ is called a probability measure on $\mathcal{L}$ if $\phi(1)=1$ and $\phi(A \vee B)=\phi(A)+\phi(B)$ for any mutually orthogonal elements $A$ and $B$. A probability measure $\phi$ is called a $\sigma$-additive probability measure on $\mathcal{L}$ if for any countable, mutually orthogonal elements $\{ A_i | i \in \mathbb{N} \}$, 
\[ \phi(\vee_{i \in \mathbb{N}} A_i)=\sum_{i \in \mathbb{N}} \phi(A_i). \]
If $\phi(A)=0$ implies $A=0$ for any $A \in \mathcal{L}$, then $\phi$ is called faithful.

Let $\mathcal{L}$ be an orthomodular lattice, $\phi$ be a probability measure on $\mathcal{L}$. We call $(\mathcal{L}, \phi)$ a nonclassical probability space. If $\mathcal{L}$ is a Boolean algebra, $(\mathcal{L}, \phi)$ is called a classical probability space. Thus nonclassical probability measure space is a more general notion than a classical one. An example of this space is a quantum probability space. 

\begin{definition} 
Let $\mathcal{L}$ be an orthomodular lattice, and let $\phi$ be a probability measure on $\mathcal{L}$. If $A$ and $B$ in $\mathcal{L}$ are compatible and
\[ \phi(A \wedge B)>\phi(A)\phi(B), \]
then $A$ and $B$ are called correlated with respect to $\phi$.
\end{definition}

The following lemmas are needed in this paper.

\begin{lemma}
\label{inequality_lemma}
Let $\mathcal{L}$ be an orthomodular lattice, and let $\phi$ be a probability measure on $\mathcal{L}$. Let $A$, $B$, and $C$ in $\mathcal{L}$ be mutually compatible elements in $\mathcal{L}$. Then 
\[ \phi(A \wedge C)\phi(B \wedge C) \geq \phi((A \wedge B) \wedge C)\phi((A \vee B) \wedge C) \]
\end{lemma}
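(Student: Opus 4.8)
The plan is to push everything into a Boolean algebra and then reduce to an elementary inequality between nonnegative numbers. Since $A$, $B$, and $C$ are mutually compatible, they generate a Boolean subalgebra $\mathcal{B}$ of $\mathcal{L}$ containing $0$ and $1$ (this is a standard consequence of the Foulis--Holland theorem; see \cite{kalmbach1983orthomodular}). The measure $\phi$ restricts to a finitely additive probability measure on $\mathcal{B}$, so $\phi(X \vee Y) = \phi(X) + \phi(Y)$ for any two orthogonal elements $X, Y$ of $\mathcal{B}$. All the elements appearing in the statement, as well as the auxiliary elements introduced below, lie in $\mathcal{B}$, so from now on I would argue entirely inside $\mathcal{B}$, where $\vee$ and $\wedge$ distribute over each other.

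Next I would set $p := A \wedge B \wedge C$, $\ q := A \wedge B^{\perp} \wedge C$, and $\ r := A^{\perp} \wedge B \wedge C$. These three elements are pairwise orthogonal: $p \leq B$ while $q \leq B^{\perp}$, and $p, q \leq A$ while $r \leq A^{\perp}$. Using distributivity in $\mathcal{B}$ together with $B \vee B^{\perp} = A \vee A^{\perp} = 1$, one checks the decompositions $A \wedge C = p \vee q$, $\ B \wedge C = p \vee r$, and $(A \vee B) \wedge C = (A \wedge C) \vee (B \wedge C) = p \vee q \vee r$, while trivially $(A \wedge B) \wedge C = p$. Applying finite additivity along these orthogonal decompositions yields $\phi(A \wedge C) = \phi(p) + \phi(q)$, $\ \phi(B \wedge C) = \phi(p) + \phi(r)$, $\ \phi\big((A \vee B) \wedge C\big) = \phi(p) + \phi(q) + \phi(r)$, and $\phi\big((A \wedge B) \wedge C\big) = \phi(p)$.

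Finally, writing $x = \phi(p)$, $y = \phi(q)$, $z = \phi(r)$, all of which are nonnegative, the claimed inequality becomes $(x + y)(x + z) \geq x(x + y + z)$, which after expanding both sides is equivalent to $yz \geq 0$ and hence holds. The only step calling for genuine care is the first one: making sure that pairwise compatibility of $A$, $B$, $C$ really does place all the relevant meets, joins and orthocomplements into a common Boolean subalgebra on which $\phi$ is additive, so that the distributive manipulations and the orthogonal decompositions above are legitimate. Once that is secured, the remainder is routine bookkeeping with orthogonal joins.
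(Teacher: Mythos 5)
Your proof is correct, and it rests on the same foundation as the paper's --- the Foulis--Holland theorem guaranteeing that three pairwise compatible elements generate a distributive (hence Boolean) subalgebra on which $\phi$ is finitely additive over orthogonal joins --- but the computational core is organized differently. The paper works directly with the measure values: it writes $\phi(A\wedge C)\phi(B\wedge C)$ as a sum of two products, lower-bounds one factor using the monotonicity $\phi(A\wedge C)\geq \phi(A\wedge B\wedge C)$, and then reassembles $\phi(A\wedge C)+\phi(B\wedge C)-\phi(A\wedge B\wedge C)$ into $\phi((A\vee B)\wedge C)$ by inclusion--exclusion. You instead decompose everything into the three pairwise orthogonal pieces $p=A\wedge B\wedge C$, $q=A\wedge B^{\perp}\wedge C$, $r=A^{\perp}\wedge B\wedge C$, which reduces the claim to $(x+y)(x+z)\geq x(x+y+z)$, i.e.\ to $yz\geq 0$. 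Your version has the advantage of making the slack in the inequality explicit --- the difference of the two sides is exactly $\phi(A\wedge B^{\perp}\wedge C)\,\phi(A^{\perp}\wedge B\wedge C)$ --- which also immediately shows when equality holds; the paper's chain avoids introducing auxiliary elements but obscures this. One small point you flagged yourself and handled correctly: additivity over the three-fold orthogonal join $p\vee q\vee r$ follows from the two-element case since $p\vee q\leq r^{\perp}$, so there is no gap there.
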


\begin{proof}
By \citet[p.25]{kalmbach1983orthomodular}, the sublattice generated by $A$, $B$, and $C$ is distributive. Thus
\[
\begin{split}
\phi(A \wedge C)\phi(B \wedge C) &= \phi(A \wedge C)(\phi(B \wedge C) - \phi(A \wedge B \wedge C))+\phi(A \wedge C)\phi(A \wedge B \wedge C) \\
&\geq \phi(A \wedge B \wedge C)(\phi(B \wedge C) - \phi(A \wedge B \wedge C))+\phi(A \wedge C)\phi(A \wedge B \wedge C) \\
&=\phi((A \wedge B) \wedge C)(\phi(A \wedge C)+\phi(B \wedge C)-\phi(A \wedge B \wedge C)) \\
&=\phi((A \wedge B) \wedge C)\phi((A \wedge C) \vee ((B \wedge C) \wedge (A \wedge B \wedge C)^{\perp})) \\
&=\phi((A \wedge B) \wedge C)\phi((A \vee B) \wedge C).
\end{split}
\]
\end{proof}

\begin{lemma}
\label{correlation-lemma}
Let $\mathcal{L}$ be an orthomodular lattice, let $\phi$ be a probability measure on $\mathcal{L}$, and let $A$ and $B$ be elements in $\mathcal{L}$ such that $\phi(A \wedge B) > \phi(A)\phi(B)$. Then $\phi(A \vee B) < 1$ and $\phi(A \wedge B) > 0$.
\end{lemma}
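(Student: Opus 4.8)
The plan is to establish the two conclusions separately; both are short, the second resting on Lemma~\ref{inequality_lemma}.

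The inequality $\phi(A \wedge B) > 0$ needs nothing beyond the hypothesis and nonnegativity of $\phi$: since $\phi(A)\phi(B) \ge 0$, we get $\phi(A \wedge B) > \phi(A)\phi(B) \ge 0$.

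For $\phi(A \vee B) < 1$ the idea is to feed the triple $A$, $B$, $1$ into Lemma~\ref{inequality_lemma}. The top element $1$ is compatible with every element of $\mathcal{L}$, because $X \le 1$ and, as noted above, $X \le Y$ implies that $X$ and $Y$ are compatible; hence, $A$ and $B$ being compatible, the triple $A$, $B$, $1$ is mutually compatible and Lemma~\ref{inequality_lemma} applies with $C = 1$. Since $X \wedge 1 = X$ for every $X \in \mathcal{L}$, its conclusion becomes $\phi(A)\phi(B) \ge \phi(A \wedge B)\,\phi(A \vee B)$. I would then argue by contradiction: if $\phi(A \vee B) = 1$, this inequality would say $\phi(A)\phi(B) \ge \phi(A \wedge B)$, contradicting the hypothesis $\phi(A \wedge B) > \phi(A)\phi(B)$. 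Since $A \vee B \le 1$ gives $\phi(A \vee B) \le 1$ in any case, this forces $\phi(A \vee B) < 1$.

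I do not expect a real obstacle here. The one step worth pausing on is the choice of instrument --- realizing that Lemma~\ref{inequality_lemma} specialized at $C = 1$ is exactly what is needed, and checking that its compatibility hypothesis is met by the triple $A$, $B$, $1$ so that the relevant sublattice is distributive --- after which everything follows from nonnegativity and monotonicity of $\phi$.
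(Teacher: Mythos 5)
Your proposal is correct and follows essentially the same route as the paper: the positivity of $\phi(A \wedge B)$ from nonnegativity of $\phi$, and the bound $\phi(A \vee B) < 1$ by applying Lemma~\ref{inequality_lemma} (with $C = 1$, which the paper leaves implicit) to get $\phi(A)\phi(B) \geq \phi(A \wedge B)\phi(A \vee B)$ and combining with the hypothesis. Note that, like the paper's own proof, your argument tacitly uses that $A$ and $B$ are compatible (needed for Lemma~\ref{inequality_lemma}), which is not stated in the lemma but holds in every application since correlated events are compatible by definition.
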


\begin{proof}
Let $A$ and $B$ be mutually compatible elements in $\mathcal{L}$ such that 
\begin{equation}
\label{lemma-1}
\phi(A \wedge B)>\phi(A)\phi(B). 
\end{equation}
Then $\phi(A \wedge B)>\phi(A)\phi(B) \geq 0$.
By Lemma \ref{inequality_lemma}
\begin{equation}
\label{lemma-2}
\phi(A)\phi(B) \geq \phi(A \wedge B)\phi(A \vee B)
\end{equation}
Equations (\ref{lemma-1}) and (\ref{lemma-2}) imply 
\begin{equation}
\phi(A \vee B) < 1.
\end{equation}
\end{proof}

\begin{lemma}
\label{denseness_lemma}
\cite[Lemma 3.6]{kitajima2008reichenbach} \cite[Proposition 3.5]{gyenis2014atomicity}
Let $\mathcal{L}$ be a $\sigma$-complete atomless orthomodular lattice, $\phi$ a $\sigma$-additive probability measure on $\mathcal{L}$ and $A$ an element in $\mathcal{L}$ such that $0 < \phi(A)$. For any real number $x$ such that $0 < x < \phi(A)$, there is an element $X \in \mathcal{L}$ such that $\phi(X)=x$ and $0<X<A$.
\end{lemma}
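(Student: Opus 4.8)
The plan is to prove this as the orthomodular analogue of Sierpi\'nski's theorem that a nonatomic finite measure assumes every value in its range, via an exhaustion argument resting on two facts. The first is a \emph{continuity property}: if $C_1 \le C_2 \le \cdots$ in $\mathcal{L}$, then $\phi(\bigvee_n C_n) = \lim_n \phi(C_n)$. This follows from $\sigma$-additivity once one writes $\bigvee_n C_n = C_1 \vee \bigvee_{n \ge 1}(C_{n+1} \wedge C_n^{\perp})$, checks (using orthomodularity and $C_n \le C_{n+1}$) that the displayed elements are pairwise orthogonal with $\phi(C_{n+1} \wedge C_n^{\perp}) = \phi(C_{n+1}) - \phi(C_n)$, and telescopes. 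The second is the \emph{small-pieces lemma}: for every $B \in \mathcal{L}$ with $\phi(B) > 0$ and every $\varepsilon > 0$ there is $C$ with $C \le B$ and $0 < \phi(C) < \varepsilon$; this is exactly \cite[Lemma 3.6]{kitajima2008reichenbach} and \cite[Proposition 3.5]{gyenis2014atomicity}, and it is the only place atomlessness enters.

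Granting these, I would fix $x$ with $0 < x < \phi(A)$ and build an increasing sequence $0 = C_0 \le C_1 \le \cdots \le A$ with $\phi(C_n) \le x$ as follows. If $\phi(C_n) = x$, stop and take $X := C_n$. Otherwise $\phi(A \wedge C_n^{\perp}) = \phi(A) - \phi(C_n) > 0$, so by the small-pieces lemma the set $\{D \le A \wedge C_n^{\perp} : \phi(D) \le x - \phi(C_n)\}$ contains elements of positive measure; let $\delta_n > 0$ be the supremum of $\phi(D)$ over this set, pick $D_n$ in the set with $\phi(D_n) > \delta_n/2$, and set $C_{n+1} := C_n \vee D_n$. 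Since $D_n \le C_n^{\perp}$ this join is orthogonal, so $\phi(C_{n+1}) = \phi(C_n) + \phi(D_n) \le x$ and $C_{n+1} \le A$. Put $X := \bigvee_n C_n$; the continuity property gives $\phi(X) = \lim_n \phi(C_n) \le x$, while $\sum_n \phi(D_n) = \phi(X) \le 1 < \infty$. If $\phi(X) < x$, then $\phi(A \wedge X^{\perp}) = \phi(A) - \phi(X) > 0$, so the small-pieces lemma yields $E \le A \wedge X^{\perp}$ with $0 < \phi(E) \le x - \phi(X)$; since $X \ge C_n$ we have $E \le A \wedge C_n^{\perp}$ and $\phi(E) \le x - \phi(C_n)$, so $E$ competes in the supremum defining $\delta_n$, whence $\phi(D_n) > \delta_n/2 \ge \phi(E)/2 > 0$ for all $n$, contradicting $\sum_n \phi(D_n) < \infty$. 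Hence $\phi(X) = x$. Finally $\phi(X) = x > 0$ gives $X \ne 0$, and $\phi(A \wedge X^{\perp}) = \phi(A) - x > 0$ gives $A \wedge X^{\perp} \ne 0$, so $X < A$; thus $0 < X < A$, and the same conclusion holds if the construction halted early at $X = C_n$.

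The main obstacle is the small-pieces lemma, where atomlessness and $\sigma$-completeness are both essential. I would attack it by a maximality argument: among the families of pairwise orthogonal sub-elements of $B$ of positive measure, choose a maximal one $\mathcal{F}$. Because the measures of its members sum to at most $\phi(B)$, $\mathcal{F}$ is at most countable, and by maximality the join of its members has measure $\phi(B) > 0$, so $\mathcal{F}$ is nonempty. If $\mathcal{F}$ is infinite, say $\mathcal{F} = \{E_1, E_2, \dots\}$, then $\sum_i \phi(E_i) \le \phi(B) < \infty$ forces $\phi(E_i) \to 0$, so some $E_i$ has $0 < \phi(E_i) < \varepsilon$ and we are done. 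The remaining possibility, that $\mathcal{F}$ is finite, must be ruled out: then some member $E$ would be a ``measure atom'' of $[0,E]$ (each sub-element has measure $0$ or $\phi(E)$), since otherwise a proper positive-measure split of $E$ would enlarge $\mathcal{F}$; using atomlessness one peels off a strictly descending chain of such measure atoms, all of measure $\phi(E)$, whose meets retain measure $\phi(E)$ by the continuity property. Excluding this configuration is the delicate point of the whole proof; it is established in the references cited above, so in the paper one may simply invoke them.
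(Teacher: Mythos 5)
The paper offers no proof of this lemma at all---it is imported wholesale from the two cited references---so there is no internal argument to compare yours against; I assess it on its own terms. Your exhaustion argument is correct and complete given the ``small-pieces lemma'': the continuity property, the orthogonality and telescoping claims, the bound $\phi(C_{n+1})=\phi(C_n)+\phi(D_n)\le x$, and the final contradiction between $\phi(D_n)>\phi(E)/2$ for all $n$ and $\sum_n\phi(D_n)<\infty$ all check out, using only orthomodularity and $\sigma$-additivity. The genuine gap is in the small-pieces lemma itself, at exactly the point you flag as delicate. Two problems. A repairable one: a Zorn-maximal pairwise-orthogonal family is maximal under \emph{inclusion}, and replacing a member $E$ by the split $\{D,\,E\wedge D^{\perp}\}$ does not yield a superset of $\mathcal{F}$, so maximality does not by itself force surviving members to be measure atoms. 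A fatal one: the configuration to be excluded---$E\le B$ with $\phi(E)=c>0$ and every sub-element of measure $0$ or $c$---cannot be excluded from lattice atomlessness and $\sigma$-additivity alone. Your descending-chain sketch does not terminate (the meet again has measure $c$ and is again a measure atom, and the transfinite recursion has no bound), and in fact if $\kappa$ is a measurable cardinal with a uniform $\kappa$-complete ultrafilter $U$, then $P(\kappa)/[\kappa]^{<\kappa}$ is an atomless $\sigma$-complete Boolean algebra (hence an orthomodular lattice) on which $U$ induces a two-valued $\sigma$-additive probability measure taking no value strictly between $0$ and $1$. So the statement as literally printed, with no faithfulness hypothesis, admits no proof at this level of generality; deferring the ``delicate point'' to the references conceals that it is not merely delicate but unprovable without an extra hypothesis.

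The gap closes at once if one adds faithfulness of $\phi$, which is precisely the hypothesis in force in Theorem \ref{size-3} where the lemma is applied. If $E$ were a measure atom, lattice atomlessness gives $0<D<E$; faithfulness forces $\phi(D)=c$, hence $\phi(E\wedge D^{\perp})=\phi(E)-\phi(D)=0$, hence $E\wedge D^{\perp}=0$, hence $E=D\vee(D^{\perp}\wedge E)=D$, a contradiction. Once measure atoms are excluded you do not need the maximal-orthogonal-family detour at all: below any $B$ with $\phi(B)>0$ choose $D$ with $0<\phi(D)<\phi(B)$ and keep whichever of $D$ and $B\wedge D^{\perp}$ has the smaller (still positive) measure; iterating at least halves the measure each step and produces sub-elements of measure below any $\varepsilon$, which is all your main argument needs. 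In short: your architecture (small pieces plus greedy exhaustion) is the right one and the exhaustion half is sound, but the proof only becomes valid under a faithfulness assumption that neither your write-up nor the lemma statement makes explicit.
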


Reichenbachian common causes in a general probability theory are defined as follows.

\begin{definition}
\cite[Definition 6.1]{hofer2013principle}
Let $\mathcal{L}$ be an orthomodular lattice, and let $\phi$ be a probability measure on $\mathcal{L}$. 
If $A$ and $B$ in $\mathcal{L}$ are correlated, then $C \in \mathcal{L}$ is called a common cause for the correlation if $C$ is compatible with both $A$ and $B$, and the following conditions hold:
\begin{equation}
\frac{\phi(A \wedge C)}{\phi(C)}\frac{\phi(B \wedge C)}{\phi(C)}=\frac{\phi(A \wedge B \wedge C)}{\phi(C)}
\end{equation}
\begin{equation}
\frac{\phi(A \wedge C^{\perp})}{\phi(C^{\perp})}\frac{\phi(B \wedge C^{\perp})}{\phi(C^{\perp})}=\frac{\phi(A \wedge B \wedge C^{\perp})}{\phi(C^{\perp})}
\end{equation}
\begin{equation}
\frac{\phi(A \wedge C)}{\phi(C)} > \frac{\phi(A \wedge C^{\perp})}{\phi(C^{\perp})}
\end{equation}
\begin{equation}
\frac{\phi(B \wedge C)}{\phi(C)} > \frac{\phi(B \wedge C^{\perp})}{\phi(C^{\perp})}
\end{equation}
\end{definition}

The notion of Reichenbachian common causes can be generalized as follows.

\begin{definition}
\label{rccs-definition}
Let $\mathcal{L}$ be an orthomodular lattice, and let $\phi$ be a probability measure on $\mathcal{L}$. 

The set $\{ C_j \in \mathcal{L} | j \in J \}$ is called a partition in $\mathcal{L}$ if $\vee_{j \in J} C_j=1$, and $C_i$ and $C_j$ are orthogonal for $i \neq j$, where $J$ is an index set.

If $A$ and $B$ in $\mathcal{L}$ are correlated, then a partition $\{ C_j | j \in J \}$ is called a Reichenbachian common cause system for the correlation if $C_j$ is compatible with both $A$ and $B$ for every $j \in J$, and the following conditions hold:
\begin{equation}
\label{rccs1}
\frac{\phi(A \wedge B \wedge C_j)}{\phi(C_j)}=\frac{\phi(A \wedge C_j)}{\phi(C_j)}\frac{\phi(B \wedge C_j)}{\phi(C_j)} 
\end{equation}
for any $j \in J$, and
\begin{equation}
\label{rccs2}
\left( \frac{\phi(A \wedge C_i)}{\phi(C_i)} - \frac{\phi(A \wedge C_j)}{\phi(C_j)} \right) \left(\frac{\phi(B \wedge C_i)}{\phi(C_i)} - \frac{\phi(B \wedge C_j)}{\phi(C_j)} \right) > 0 
\end{equation}
for any mutually distinct elements $i, j \in J$.
The cardinality of the index set $J$ is called the size of the common cause system.
\end{definition}

The following Lemma shows that $A$ and $B$ are correlated in $\phi$ if Equations (\ref{rccs1}) and (\ref{rccs2}) hold.

\begin{lemma}
\label{correlation-rccs-lemma}
\citep{hofer2004reichenbachian, hofer2006reichenbachian}
Let $\mathcal{L}$ be an orthomodular lattice, let $\phi$ be a probability measure on $\mathcal{L}$, let $\{ C_j | j \in J \}$ be a partition of $\mathcal{L}$ such that $\phi(C_j)>0$ for any $j \in J$, and let $A$ and $B$ be mutually commuting elements in $\mathcal{L}$. If
\[ \frac{\phi(A \wedge B \wedge C_j)}{\phi(C_j)}=\frac{\phi(A \wedge C_j)}{\phi(C_j)}\frac{\phi(B \wedge C_j)}{\phi(C_j)},  \]
for any $j \in J$, then
\[ \phi(A \wedge B) - \phi(A)\phi(B) = \frac{1}{2} \sum_{i \neq j} \phi(C_i)\phi(C_j) \left( \frac{\phi(A \wedge C_i)}{\phi(C_i)} - \frac{\phi(A \wedge C_j)}{\phi(C_j)} \right)\left( \frac{\phi(B \wedge C_i)}{\phi(C_i)} - \frac{\phi(B \wedge C_j)}{\phi(C_j)} \right) \]
\end{lemma}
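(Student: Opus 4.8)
The plan is to compute $\phi(A\wedge B)-\phi(A)\phi(B)$ directly by expanding everything along the partition $\{C_j\mid j\in J\}$. Since each $C_j$ is compatible with both $A$ and $B$, and since the $C_j$ are mutually orthogonal with join $1$, the compatibility calculus (Kalmbach, Theorem used in Lemma \ref{inequality_lemma}) lets us write $\phi(A)=\sum_{j\in J}\phi(A\wedge C_j)$, $\phi(B)=\sum_{j\in J}\phi(B\wedge C_j)$, and $\phi(A\wedge B)=\sum_{j\in J}\phi(A\wedge B\wedge C_j)$; the last of these requires that $A\wedge B$ be compatible with each $C_j$, which follows from the fact that $A$, $B$, $C_j$ generate a distributive sublattice. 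For convenience abbreviate $a_j=\phi(A\wedge C_j)/\phi(C_j)$, $b_j=\phi(B\wedge C_j)/\phi(C_j)$, and $p_j=\phi(C_j)$, so that $\sum_j p_j=1$. The screening-off hypothesis (\ref{rccs1}) says exactly that $\phi(A\wedge B\wedge C_j)=p_j a_j b_j$.

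Next I would substitute these expressions into $\phi(A\wedge B)-\phi(A)\phi(B)$, obtaining
\[
\sum_{j\in J} p_j a_j b_j \;-\; \Bigl(\sum_{i\in J} p_i a_i\Bigr)\Bigl(\sum_{j\in J} p_j b_j\Bigr).
\]
Using $\sum_j p_j=1$ to write the first sum as $\bigl(\sum_i p_i\bigr)\bigl(\sum_j p_j a_j b_j\bigr)$, this becomes $\sum_{i,j} p_i p_j (a_j b_j - a_i b_j)=\sum_{i,j} p_i p_j b_j(a_j-a_i)$. This is the standard covariance-as-sum-over-pairs identity: symmetrizing the double sum over the swap $i\leftrightarrow j$ and averaging gives
\[
\tfrac12\sum_{i,j} p_i p_j\bigl[b_j(a_j-a_i)+b_i(a_i-a_j)\bigr]
=\tfrac12\sum_{i,j} p_i p_j (a_i-a_j)(b_i-b_j),
\]
and the diagonal terms $i=j$ vanish, so the sum may be restricted to $i\neq j$. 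Unwinding the abbreviations yields precisely the claimed formula. This is really just the elementary fact that the variance/covariance of a discrete distribution equals one half the expected squared (resp. product of) pairwise differences, adapted to the conditional expectations $a_j,b_j$ with weights $p_j$.

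The only genuine subtlety — and the step I would be most careful about — is the infinite-index case: if $J$ is countably infinite one must make sense of the rearrangements above. Here one uses that $\phi$ is only assumed to be a probability measure (finitely additive) in the statement, so strictly speaking $J$ should be taken finite, or $\phi$ $\sigma$-additive and the series $\sum_j p_j a_j b_j$, $\sum_j p_j a_j$, etc., are absolutely convergent because $0\le a_j,b_j\le 1$ and $\sum_j p_j=1$, which legitimizes expanding the product of the two sums and reindexing by Fubini for nonnegative (or absolutely convergent) double series. Modulo that bookkeeping, the proof is a one-line algebraic identity once the additivity-along-the-partition expansions are in place, and no use of (\ref{rccs2}) is needed — that inequality is exactly what makes the right-hand side positive, which is the content of the following discussion rather than of this lemma.
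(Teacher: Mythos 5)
Your proof is correct. Note that the paper itself gives no proof of this lemma --- it is stated with a citation to Hofer-Szab\'o and R\'edei --- so there is nothing internal to compare against; your argument (expand $\phi(A)$, $\phi(B)$, $\phi(A\wedge B)$ additively along the partition using compatibility, substitute the screening-off condition, and symmetrize the double sum to get the covariance-as-pairwise-differences identity) is exactly the standard one from the cited sources, and your remarks about needing $A\wedge B$ compatible with each $C_j$ (via distributivity of the sublattice generated by $A$, $B$, $C_j$) and about absolute convergence when $J$ is countably infinite are the right points to flag.
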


Let $A$ and $B$ be compatible elements in an orthomodular lattice $\mathcal{L}$, and let $\mathcal{B}$ be the Boolean sub-lattice of $\mathcal{L}$ which contains $A$ and $B$. If $A \wedge B^{\perp} = 0$, then $A=(A \wedge B) \vee (A \wedge (A \wedge B)^{\perp})=(A \wedge B) \vee (A \wedge B^{\perp})=A \wedge B$. It means that, for any truth-value assignment $h$ of $\mathcal{B}$, $h(A)=1$ entails $h(B)=1$. If the truth-value of $B$ is independent of that of $A$, $A \wedge B^{\perp}$ should not be $0$. This is a motivation of logical independence

\begin{definition}
\citep{redei1995logical, redei1995logically, redei1998quantum} 
Let $\mathcal{L}$ be an orthomodular lattice, and let $A$ and $B$ be elements in $\mathcal{L}$. If $A \wedge B \neq 0$, $A^{\perp} \wedge B^{\perp} \neq 0$, $A \wedge B^{\perp} \neq 0$, and $A^{\perp} \wedge B \neq 0$, then it is said that $A$ and $B$ are logical independent.
\end{definition}

Logical independence has the following property. This lemma is used in the proof of Theorem \ref{size-3}.

\begin{lemma}
\label{logical_independence_property}
Let $\mathcal{L}$ be an orthomodular lattice, and let $A$ and $B$ be compatible elements in $\mathcal{L}$. Then the following conditions are equivalent.
\begin{enumerate}
\item $A$ and $B$ are logical independent.
\item $A \vee B > A$, $A \vee B > B$, $A \vee B^{\perp} >A$, $A \vee B^{\perp} > B^{\perp}$
\end{enumerate}
\end{lemma}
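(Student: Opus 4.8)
The plan is to reduce the whole statement to elementary Boolean-algebra bookkeeping inside the sublattice generated by $A$ and $B$. Since $A$ and $B$ are compatible, by \citet[p.~25]{kalmbach1983orthomodular} the sublattice $\mathcal{B}$ they generate is distributive, hence a Boolean algebra, and its order, meet, join, and orthocomplementation are the ones inherited from $\mathcal{L}$. All four elements appearing in conditions (1) and (2) — namely $A\wedge B$, $A\wedge B^{\perp}$, $A^{\perp}\wedge B$, $A^{\perp}\wedge B^{\perp}$, as well as $A\vee B$ and $A\vee B^{\perp}$ — lie in $\mathcal{B}$, so I may compute as in a Boolean algebra throughout.

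The one elementary fact to isolate is the following: for elements $X,Y$ of a Boolean algebra, $X\vee Y>X$ if and only if $X^{\perp}\wedge Y\neq 0$. Indeed, $X\vee Y>X$ holds precisely when $Y\not\le X$; and $Y\le X$ is equivalent to $X^{\perp}\wedge Y=0$, because $Y=(Y\wedge X)\vee(Y\wedge X^{\perp})$ with $Y\wedge X\le X$. I will state and verify this in a line.

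Then the proof is just four applications of this fact. Taking $(X,Y)=(A,B)$ gives $A\vee B>A \Leftrightarrow A^{\perp}\wedge B\neq 0$; taking $(X,Y)=(B,A)$ gives $A\vee B>B \Leftrightarrow A\wedge B^{\perp}\neq 0$; taking $(X,Y)=(A,B^{\perp})$ gives $A\vee B^{\perp}>A \Leftrightarrow A^{\perp}\wedge B^{\perp}\neq 0$; and taking $(X,Y)=(B^{\perp},A)$ gives $A\vee B^{\perp}>B^{\perp} \Leftrightarrow A\wedge B\neq 0$. The conjunction of the four left-hand sides is exactly condition (2), and the conjunction of the four right-hand sides is exactly the definition of logical independence of $A$ and $B$, which is condition (1); so (1) $\Leftrightarrow$ (2).

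I do not expect a real obstacle. The only point needing minor care is the passage to the Boolean sublattice: one must note that the order relation and the lattice/orthocomplement operations used in conditions (1) and (2) are unchanged when computed inside $\mathcal{B}$, which is the standard property of the sublattice generated by a compatible pair and is exactly why compatibility is assumed in the hypothesis.
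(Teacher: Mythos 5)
Your proof is correct and takes essentially the same route as the paper's: both work inside the distributive (Boolean) sublattice generated by the compatible pair and rest on the equivalence $X^{\perp}\wedge Y=0 \Leftrightarrow Y\le X \Leftrightarrow X\vee Y=X$, which is exactly where compatibility is used. You merely package this as one reusable biconditional applied four times, whereas the paper argues the two directions and the four inequalities separately; the mathematical content is identical.
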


\begin{proof}
\begin{description}
\item[$1 \Longrightarrow 2$]

Let $A$ and $B$ are compatible and logical independent elements.
Suppose $A=A \vee (A^{\perp} \wedge B)$. Then $A \geq A^{\perp} \wedge B$. Since $A$ and $B$ are compatible, 
\begin{equation}
B=(B \wedge A) \vee (B \wedge A^{\perp}) \leq (B \wedge A) \vee A = A.
\end{equation}
Then $A^{\perp} \wedge B \leq A^{\perp} \wedge A = 0$. It contradicts with logical independence between $A$ and $B$. 

The set of $A$, $A^{\perp}$, and $B$ are distributive because $A$ and $B$ are compatible \cite[p.25]{kalmbach1983orthomodular}. Thus 
\begin{equation}
\label{t9.9}
A < A \vee (A^{\perp} \wedge B) =A \vee B
\end{equation}
Similarly
\begin{equation}
\label{lip2}
B <   B \vee (A \wedge B^{\perp}) = A \vee B ,
\end{equation}
\begin{equation}
\label{lip3}
A <  A \vee (A^{\perp} \wedge B^{\perp}) = A \vee B^{\perp},
\end{equation}
\begin{equation}
\label{lip4}
B^{\perp} <  (A \wedge B) \vee B^{\perp} =A \vee B^{\perp}.
\end{equation}

\item[$2 \Longrightarrow 1$]

Let $A$ and $B$ be compatible elements in $\mathcal{L}$ which satisfies Condition 2. Then
\begin{equation}
\label{lip5}
A < A \vee B = A \vee (A^{\perp} \wedge B), 
\end{equation}
\begin{equation}
\label{lip6}
B < A \vee B = B \vee (A \wedge B^{\perp}),
\end{equation}
\begin{equation}
\label{lip7}
A < A \vee B^{\perp} = A \vee (A^{\perp} \wedge B^{\perp}),
\end{equation}
\begin{equation}
\label{lip8}
B^{\perp} < A \vee B^{\perp}  = B^{\perp} \vee  (A \wedge B).
\end{equation}
Therefore, $A$ and $B$ are logical independent.
\end{description}
\end{proof}

The following theorem shows that logical independence plays an essential role in a Reichenbachian common cause system. 

\begin{theorem}
\label{logical-independence-rccs-proposition}
Let $\mathcal{L}$ be an orthomodular lattice, let $\phi$ be a probability measure on $\mathcal{L}$, let $A$ and $B$ be compatible elements in $\mathcal{L}$ such that $\phi(A \wedge B) > \phi(A)\phi(B)$, and let $n$ be a natural number such that $n \geq 3$. If there exists a Reichenbachian common cause system of size $n$ for the correlation between $A$ and $B$, then $A$ and $B$ are logical independent.
\end{theorem}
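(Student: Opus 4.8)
The plan is to establish the four inequalities in the definition of logical independence: $A \wedge B \neq 0$, $A^{\perp} \wedge B^{\perp} \neq 0$, $A \wedge B^{\perp} \neq 0$, and $A^{\perp} \wedge B \neq 0$. Two of them should follow immediately from the correlation hypothesis via Lemma \ref{correlation-lemma}: that lemma gives $\phi(A \wedge B) > 0$ and $\phi(A \vee B) < 1$, so $A \wedge B \neq 0$ at once, and since $(A \vee B)^{\perp} = A^{\perp} \wedge B^{\perp}$ while $\phi(A \vee B) < 1$ forces $A \vee B \neq 1$, also $A^{\perp} \wedge B^{\perp} \neq 0$. These two use nothing about the common cause system.

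For the remaining two inequalities I would argue by contradiction, and this is where the size hypothesis $n \geq 3$ enters. Let $\{C_j \mid j \in J\}$ be the given Reichenbachian common cause system, so that $|J| = n$ and (since the conditional probabilities in Definition \ref{rccs-definition} must be defined) $\phi(C_j) > 0$ for every $j$; abbreviate $a_j = \phi(A \wedge C_j)/\phi(C_j)$ and $b_j = \phi(B \wedge C_j)/\phi(C_j)$. The first observation is that condition (\ref{rccs2}) forces the numbers $a_j$ to be pairwise distinct and the numbers $b_j$ to be pairwise distinct, since a vanishing factor $(a_i - a_j)$ or $(b_i - b_j)$ would make the product zero rather than positive. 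Now suppose $A \wedge B^{\perp} = 0$. Using compatibility of $A$ and $B$ exactly as in the paragraph preceding the definition of logical independence, this gives $A = A \wedge B$, i.e. $A \leq B$; hence $A \wedge B \wedge C_j = A \wedge C_j$ for every $j$, and the screening-off condition (\ref{rccs1}) collapses to $a_j = a_j b_j$, that is $a_j(1 - b_j) = 0$. So each index $j$ satisfies $a_j = 0$ or $b_j = 1$. Since the $a_j$ are pairwise distinct, at most one index has $a_j = 0$, so because $n \geq 3$ there are at least two indices with $b_j = 1$ — contradicting pairwise distinctness of the $b_j$. Therefore $A \wedge B^{\perp} \neq 0$. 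A symmetric argument, interchanging $A$ and $B$ (so $A^{\perp} \wedge B = 0$ yields $B \leq A$ and (\ref{rccs1}) becomes $b_j(1 - a_j) = 0$), gives $A^{\perp} \wedge B \neq 0$. With all four inequalities in hand, $A$ and $B$ are logical independent.

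I expect the only genuinely substantive point to be the counting step just described: the entire force of the hypothesis $n \geq 3$ is that a system of three or more screening-off subensembles cannot be consistent with $B$ (respectively $A$) having conditional probability $1$ on all but at most one of them, which is precisely what the degenerate case $A \leq B$ (respectively $B \leq A$) would demand. Everything else — the reduction to the four inequalities, the use of Lemma \ref{correlation-lemma}, and the collapse of (\ref{rccs1}) under $A \leq B$ — is routine.
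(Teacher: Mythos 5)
Your proposal is correct and follows essentially the same route as the paper: both obtain $A \wedge B \neq 0$ and $A^{\perp} \wedge B^{\perp} \neq 0$ from Lemma \ref{correlation-lemma}, reduce the degenerate case $A \wedge B^{\perp} = 0$ (by symmetry with $A^{\perp} \wedge B = 0$) to $A = A \wedge B$ via compatibility, and then observe that the collapsed screening-off condition forces each index to satisfy $a_j = 0$ or $b_j = 1$, so that $n \geq 3$ yields two indices violating the strict positivity in condition (\ref{rccs2}). The paper phrases the final step as a direct pigeonhole argument (two indices must fall in the same case, killing one factor of the product), while you phrase it via pairwise distinctness of the $a_j$ and $b_j$; these are the same counting argument.
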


\begin{proof}
Let $A$ and $B$ be compatible elements in $\mathcal{L}$ which are not logical independent and correlated in a probability measure $\phi$ on $\mathcal{L}$. By Lemma \ref{correlation-lemma}, $\phi(A \wedge B) > 0$ and $\phi(A^{\perp} \wedge B^{\perp}) > 0$. Thus 
\begin{equation}
\label{lemma-3}
A \wedge B \neq 0, \ \ \ \ A^{\perp} \wedge B^{\perp} \neq 0.
\end{equation}
Since $A$ and $B$ are not logical independent, $A \wedge B^{\perp}=0$ or $A^{\perp} \wedge B = 0$ by Equation (\ref{lemma-3}). By symmetry we do not need to differentiate between $A^{\perp} \wedge B=0$ and $A \wedge B^{\perp} = 0$, so we assume $A \wedge B^{\perp} = 0$. 

Since $A$ and $B$ are compatible,
\begin{equation}
\label{lemma-4}
A=(A \wedge B) \vee (A \wedge B^{\perp})=A \wedge B.
\end{equation}
Suppose that a Reichanbachian common cause system $\{ C_1, C_2, \dots, C_n \}$ for the correlation between $A$ and $B$ exists, where $n \geq 3$. By Equation (\ref{lemma-4})
\begin{equation}
\frac{\phi(A \wedge C_i)}{\phi(C_i)}\frac{\phi(B \wedge C_i)}{\phi(C_i)}=\frac{\phi(A \wedge B \wedge C_i)}{\phi(C_i)}=\frac{\phi(A \wedge C_i)}{\phi(C_i)}.
\end{equation}
Thus, for any $i \in \{ 1, 2, \dots ,n \}$, either $\phi(A \wedge C_i)/\phi(C_i)=0$ or $\phi(B \wedge C_i)/\phi(C_i)=1$ holds. Therefore there are mutually distinct natural numbers $i,j \in \{ 1, 2, \dots ,n \}$ such that either
\begin{equation}
\frac{\phi(A \wedge C_i)}{\phi(C_i)}=\frac{\phi(A \wedge C_j)}{\phi(C_j)}=0 \ \ \text{or} \ \  \frac{\phi(B \wedge C_i)}{\phi(C_i)}=\frac{\phi(B \wedge C_j)}{\phi(C_j)}=1
\end{equation}
because $n \geq 3$. It implies
\begin{equation}
\left( \frac{\phi(A \wedge C_i)}{\phi(C_i)} - \frac{\phi(A \wedge C_j)}{\phi(C_j)} \right) \left(\frac{\phi(B \wedge C_i)}{\phi(C_i)} - \frac{\phi(B \wedge C_j)}{\phi(C_j)} \right) = 0.
\end{equation}
Thus $\{ C_1, C_2, \dots, C_n \}$ does not satisfy Equation (\ref{rccs2}) in Definition \ref{rccs-definition}. Therefore no common cause system for the correlation between $A$ and $B$ exists.
\end{proof}

According Theorem \ref{logical-independence-rccs-proposition}, if correlated events are not logical independent, there is no Reichenbachian common cause system for this correlation. Therefore logical independence is essential for Reichenbachian common cause systems.

A notion of causally $n$-closedness is obtained if the notion of common causes is replaced with the concept of common cause systems.

\begin{definition}
Let $\mathcal{L}$ be an orthomodular lattice, and let $\phi$ be a probability measure on $\mathcal{L}$. The probability space $(\mathcal{L}, \phi)$ is called causally $n$-closed with respect to logical independence if for any correlation between compatible and logical independent elements there exists a Reichenbachian common cause system of size $n$ in $(\mathcal{L}, \phi)$.
\end{definition}

In Theorem \ref{size-3}, we show that a general probability theory which is represented by an atomless orthomodular lattice contains Reichenbachian common cause systems of size 3 for any correlation between two logical independent events.

\begin{theorem}
\label{size-3}
Let $\mathcal{L}$ be a $\sigma$-complete atomless orthomodular lattice, and $\phi$ a faithful $\sigma$-additive probability measure on $\mathcal{L}$. Then the probability space $(\mathcal{L}, \phi)$ is causally $3$-closed with respect to logical independence.
\end{theorem}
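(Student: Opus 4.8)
The plan is to reduce the theorem to an elementary moment problem for three-point distributions, solve that, and then transfer the solution into $\mathcal{L}$ using atomlessness. Fix compatible, logically independent $A,B$ with $\phi(A\wedge B)>\phi(A)\phi(B)$, and let $e_1=A\wedge B$, $e_2=A\wedge B^{\perp}$, $e_3=A^{\perp}\wedge B$, $e_4=A^{\perp}\wedge B^{\perp}$ be the four atoms of the Boolean subalgebra generated by $A$ and $B$; they are pairwise orthogonal with $e_1\vee e_2\vee e_3\vee e_4=1$, and by logical independence together with faithfulness of $\phi$ each has positive measure. Put $p=\phi(A)$, $q=\phi(B)$, $r=\phi(A\wedge B)$, so that $\phi(e_1)=r$, $\phi(e_2)=p-r$, $\phi(e_3)=q-r$, $\phi(e_4)=1-p-q+r$ are all positive and $r>pq$. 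The common cause system will be sought in the form $C_j=X_1^{(j)}\vee X_2^{(j)}\vee X_3^{(j)}\vee X_4^{(j)}$, $j=1,2,3$, where for each $i$ the triple $\{X_i^{(1)},X_i^{(2)},X_i^{(3)}\}$ is a partition of $e_i$. Since $X_1^{(j)}\vee X_2^{(j)}\le A$ and $X_3^{(j)}\vee X_4^{(j)}\le A^{\perp}$ (and similarly with $B$), distributivity of the pairwise compatible triple $\{X_1^{(j)}\vee X_2^{(j)},\,X_3^{(j)}\vee X_4^{(j)},\,A\}$ yields $C_j\wedge A=X_1^{(j)}\vee X_2^{(j)}$ and $C_j\wedge A^{\perp}=X_3^{(j)}\vee X_4^{(j)}$, so every such $C_j$ is compatible with $A$ and with $B$; the same distributivity gives $\phi(A\wedge C_j)=\phi(X_1^{(j)})+\phi(X_2^{(j)})$, $\phi(B\wedge C_j)=\phi(X_1^{(j)})+\phi(X_3^{(j)})$, $\phi(A\wedge B\wedge C_j)=\phi(X_1^{(j)})$, and $\phi(C_j)=\sum_i\phi(X_i^{(j)})$.

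Writing $p_{ij}=\phi(X_i^{(j)})$, $s_j=\sum_i p_{ij}$, $\alpha_j=(p_{1j}+p_{2j})/s_j$, $\beta_j=(p_{1j}+p_{3j})/s_j$, the screening condition (\ref{rccs1}) becomes $p_{1j}/s_j=\alpha_j\beta_j$ and condition (\ref{rccs2}) becomes strict monotonicity of both $(\alpha_j)$ and $(\beta_j)$. Conversely, any positive weights $s_1,s_2,s_3$ summing to $1$ and any points $(\alpha_j,\beta_j)\in(0,1)^2$ with $\sum_j s_j\alpha_j=p$, $\sum_j s_j\beta_j=q$, $\sum_j s_j\alpha_j\beta_j=r$ produce admissible $p_{ij}$ by $p_{1j}=\alpha_j\beta_js_j$, $p_{2j}=\alpha_j(1-\beta_j)s_j$, $p_{3j}=(1-\alpha_j)\beta_js_j$, $p_{4j}=(1-\alpha_j)(1-\beta_j)s_j$: these are strictly positive, and one checks $\sum_j p_{ij}=\phi(e_i)$ for each $i$. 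So the theorem reduces to finding, for the given $p,q,r$ — which satisfy $0<pq<r<\min(p,q)$, $\max(p,q)<1$ and $p+q-1<r$ — three strictly increasing comonotone points in $(0,1)^2$ and positive weights with those first and mixed moments.

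To produce them I would place the three points on a line $\beta=m\alpha+c$: then $\sum_j s_j\beta_j=q$ forces $c=q-mp$ and $\sum_j s_j\alpha_j\beta_j=r$ forces $m=(r-pq)/\mathrm{Var}(\alpha)$, which is positive since $r>pq$, so the points are automatically comonotone and strictly increasing once the $\alpha_j$ are distinct. It then remains to choose distinct $\alpha_1<\alpha_2<\alpha_3$ in $(0,1)$ with mean $p$ whose variance is large enough (equivalently $m$ small enough) that $\beta_j=q+m(\alpha_j-p)$ stays in $(0,1)$ for $j=1,2,3$. The natural candidate is a small perturbation of the degenerate comonotone coupling of $A$ and $B$ — mass essentially $1-p$ near $\alpha=0$ and mass essentially $p$ near $\alpha=1$, plus a tiny third point fixing the mean — whose limiting $\beta$-values are $\tfrac{q-r}{1-p}$ and $\tfrac{r}{p}$; these lie strictly inside $(0,1)$, and in the right order, precisely because $0<r$, $r<p$, $r<q$, $r>pq$ and $p+q-1<r$, i.e. exactly because $A$ and $B$ are logically independent and correlated. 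A continuity argument then yields a genuine interior three-point configuration with all the required properties. I expect this numerical step — in particular verifying that the $\beta_j$ remain in $(0,1)$ uniformly in $p,q,r$, including when the correlation is near maximal — to be the only real difficulty; the rest is bookkeeping.

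Finally, with the numbers $p_{ij}$ fixed, I would build the $X_i^{(j)}$ using Lemma \ref{denseness_lemma}, which applies because $\mathcal{L}$ is $\sigma$-complete and atomless, $\phi$ is $\sigma$-additive, and $\phi(e_i)>0$: inside $e_i$ choose $X_i^{(1)}$ with $\phi(X_i^{(1)})=p_{i1}$ and $0<X_i^{(1)}<e_i$, then inside $e_i\wedge(X_i^{(1)})^{\perp}$ choose $X_i^{(2)}$ with $\phi(X_i^{(2)})=p_{i2}$, and set $X_i^{(3)}=e_i\wedge(X_i^{(1)})^{\perp}\wedge(X_i^{(2)})^{\perp}$; by the orthomodular law $\{X_i^{(1)},X_i^{(2)},X_i^{(3)}\}$ partitions $e_i$ with the prescribed measures. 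Setting $C_j=\bigvee_i X_i^{(j)}$, orthogonality of the $e_i$ makes $\{C_1,C_2,C_3\}$ a partition of $1$ with $\phi(C_j)=s_j>0$, each $C_j$ is compatible with $A$ and $B$ and satisfies (\ref{rccs1}) by construction, and strict monotonicity of $(\alpha_j)$ and $(\beta_j)$ gives (\ref{rccs2}). Hence $\{C_1,C_2,C_3\}$ is a Reichenbachian common cause system of size $3$ for the correlation, and as $A,B$ were an arbitrary compatible, logically independent correlated pair, $(\mathcal{L},\phi)$ is causally $3$-closed with respect to logical independence.
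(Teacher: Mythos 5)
Your proposal is correct, but it takes a genuinely different route from the paper's proof of Theorem \ref{size-3}. The paper never passes through the four-cell moment problem: it chooses $C_1$ strictly inside $A\wedge B$ with $\phi(C_1)<\bigl(\phi(A\wedge B)-\phi(A)\phi(B)\bigr)/\bigl(1-\phi(A\vee B)\bigr)$, then chooses $C_2\leq A^{\perp}\wedge B^{\perp}$ with $\phi(C_2)=\phi(C_1^{\perp})-\phi(A\wedge C_1^{\perp})\phi(B\wedge C_1^{\perp})/\phi(A\wedge B\wedge C_1^{\perp})$ (Lemma \ref{inequality_lemma} together with $\phi(C_1^{\perp})-\phi((A\vee B)\wedge C_1^{\perp})=\phi(A^{\perp}\wedge B^{\perp})$ guarantees this number lies in $(0,\phi(A^{\perp}\wedge B^{\perp})]$), and sets $C_3=C_1^{\perp}\wedge C_2^{\perp}$; the value of $\phi(C_2)$ is rigged exactly so that screening off holds on $C_3$, and the three conditional-probability pairs are $(1,1)$, $(0,0)$ and a point of $(0,1)^2$, so comonotonicity is automatic. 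In your language, the paper simply accepts the boundary points $(\alpha,\beta)=(1,1)$ and $(0,0)$ rather than insisting on an interior configuration, which is why it needs only two applications of Lemma \ref{denseness_lemma} and no perturbation argument. Your reduction is sound --- the compatibility of $C_j$ with $A$ and $B$, the marginal identities $\sum_j p_{ij}=\phi(e_i)$, and the translation of (\ref{rccs1})--(\ref{rccs2}) into the comonotone moment conditions are all verified correctly --- and the one step you leave as a sketch does go through: taking $\alpha_1=\epsilon$, $\alpha_2=p$, $\alpha_3=1-\epsilon$ with weights $(1-\delta)(1-p-\epsilon)/(1-2\epsilon)$, $\delta$, $(1-\delta)(p-\epsilon)/(1-2\epsilon)$ gives $\mathrm{Var}(\alpha)\to p(1-p)$ and hence $\beta_1\to(q-r)/(1-p)$, $\beta_2=q$, $\beta_3\to r/p$, all strictly inside $(0,1)$ and strictly increasing precisely because $pq<r<\min(p,q)$ and $p+q-1<r$; the uniformity in $(p,q,r)$ you worry about is not needed, since $\epsilon$ and $\delta$ may depend on the pair $A,B$. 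What your approach buys is symmetry and generality: it isolates exactly what a size-$n$ system is (an $n$-point strictly comonotone distribution on $[0,1]^2$ with prescribed marginal means and mixed moment), which is the natural framework for the causal $n$-closedness question the paper leaves open. What the paper's approach buys is brevity and the avoidance of any limiting argument.
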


\begin{proof}
Let $A$ and $B$ be mutually compatible and logical independent elements in $\mathcal{L}$ such that 
\begin{equation}
\label{t1}
\phi(A \wedge B)>\phi(A)\phi(B). 
\end{equation}
By Lemma \ref{inequality_lemma}
\begin{equation}
\label{t2}
\phi(A)\phi(B) \geq \phi(A \wedge B)\phi(A \vee B)
\end{equation}
Equations (\ref{t1}) and (\ref{t2}) imply 
\begin{equation}
\label{t2.2}
0 < \phi(A \wedge B) - \phi(A)\phi(B) \leq \phi(A \wedge B)(1-\phi(A \vee B)),
\end{equation}
and 
\begin{equation}
\label{t2.01}
\phi(A \vee B) < 1
\end{equation}
by Lemma \ref{correlation-lemma}.
Inequalities (\ref{t2.2}) and (\ref{t2.01}) entail
\begin{equation}
\label{t3}
0 < \frac{\phi(A \wedge B) - \phi(A)\phi(B)}{1-\phi(A \vee B)} \leq \phi(A \wedge B). 
\end{equation}
By Lemma \ref{denseness_lemma} and Inequality (\ref{t3}) there exists an element $C_1$ such that 
\begin{equation}
\label{t3.9}
0 < C_1 < A \wedge B
\end{equation}
and
\begin{equation}
\label{t4}
0 < \phi(C_1) < \frac{\phi(A \wedge B) - \phi(A)\phi(B)}{1-\phi(A \vee B)} \leq \phi(A \wedge B). 
\end{equation}
Since $C_1$ is compatible with $A \wedge B$, $A$, and $B$,
\begin{equation}
\label{t4.1}
\phi(X \wedge C_1^{\perp})=\phi(X) - \phi(X \wedge C_1) = \phi(X)-\phi(C_1) > 0,
\end{equation}
where $X=A \wedge B, A, B$.
By Inequality (\ref{t3.9})
\begin{equation}
\label{c1-property}
\frac{\phi(A \wedge C_1)}{\phi(C_1)}=\frac{\phi(B \wedge C_1)}{\phi(C_1)}=\frac{\phi(A \wedge B \wedge C_1)}{\phi(C_1)}=1.
\end{equation}
Inequality (\ref{t4}) and Equation (\ref{t4.1}) imply
\begin{equation}
\label{t5.8}
\begin{split}
&\phi(C_1^{\perp})\phi(A \wedge B \wedge C_1^{\perp}) - \phi(A \wedge C_1^{\perp})\phi(B \wedge C_1^{\perp}) \\
&= (1-\phi(C_1))(\phi(A \wedge B) - \phi(C_1)) - (\phi(A) - \phi(C_1))(\phi(B)-\phi(C_1)) \ \ \ \ \ (\because \text{Equation (\ref{t4.1})}) \\
&= \phi(A \wedge B) - \phi(A)\phi(B) - \phi(C_1)(1-\phi(A)-\phi(B)+\phi(A \wedge B)) \\
&=\phi(A \wedge B) - \phi(A)\phi(B) - \phi(C_1)(1 - \phi(A \vee B)) \\
&>0 \ \ \ \ (\because \text{Inequality (\ref{t4}}))
\end{split}
\end{equation}
Since $C_1$ is compatible with both $A$ and $B$,
\begin{equation}
\label{t5}
\phi(A \wedge C_1^{\perp} ) \phi(B \wedge C_1^{\perp}) \geq \phi((A \wedge B) \wedge C_1^{\perp})\phi((A \vee B) \wedge C_1^{\perp}),
\end{equation}
by Lemma \ref{inequality_lemma}.

By Inequality (\ref{t3.9}) $C_1^{\perp} > (A \wedge B)^{\perp} \geq (A \vee B)^{\perp}$. Thus $C_1^{\perp}$ and $(A \vee B)^{\perp}$ are compatible. Therefore
\begin{equation}
\label{t5.7}
\phi(C_1^{\perp})=\phi(C_1^{\perp} \wedge (A \vee B)^{\perp})+\phi(C_1^{\perp} \wedge (A \vee B))=\phi((A \vee B)^{\perp})+\phi((A \vee B) \wedge C_1^{\perp}).
\end{equation}
Inequalities (\ref{t4.1}) and (\ref{t5.8}) entail
\begin{equation}
\label{t5.9}
\begin{split}
0 &<\phi(C_1^{\perp})- \frac{\phi(A \wedge C_1^{\perp})\phi(B \wedge C_1^{\perp})}{\phi(A \wedge B \wedge C_1^{\perp})} \\
&\leq \phi(C_1^{\perp}) - \phi((A \vee B) \wedge C_1^{\perp})  \ \ \ \ (\because \text{Inequality (\ref{t5}))} \\
&= \phi((A \vee B)^{\perp})  \ \ \ \ (\because \text{Equation (\ref{t5.7}))} \\
&=\phi(A^{\perp} \wedge B^{\perp}).
\end{split}
\end{equation}
By Lemma \ref{denseness_lemma} there exists an element $C_2 \in \mathcal{L}$ such that 
\begin{equation}
\label{t6}
C_2 \leq A^{\perp} \wedge B^{\perp} 
\end{equation} 
and
\begin{equation}
\label{t7}
\phi(C_2)=\phi(C_1^{\perp}) - \frac{\phi(A \wedge C_1^{\perp})\phi(B \wedge C_1^{\perp})}{\phi(A \wedge B \wedge C_1^{\perp})} > 0.
\end{equation}
By Inequality (\ref{t6})
\begin{equation}
\label{c2-property}
\frac{\phi(A \wedge C_2)}{\phi(C_2)}=\frac{\phi(B \wedge C_2)}{\phi(C_2)}=\frac{\phi(A \wedge B \wedge C_2)}{\phi(C_2)}=0.
\end{equation}
Let $C_3 := C_1^{\perp} \wedge C_2^{\perp}$. Then $C_1$, $C_2$, and $C_3$ are mutually orthogonal elements such that $C_1 \vee C_2 \vee C_3 = 1$. Thus $\{ C_1, C_2, C_3 \}$ is a partition in $\mathcal{L}$. Since $(A \wedge B) \vee (A \wedge B^{\perp}) \vee (A^{\perp} \wedge B) \vee (A^{\perp} \wedge B^{\perp})  = 1$ by \citet[p.26]{kalmbach1983orthomodular},
\begin{equation}
\label{t0.9}
\phi(A \wedge B)+\phi(A \wedge B^{\perp})+\phi(A^{\perp} \wedge B)+\phi(A^{\perp} \wedge B^{\perp}) =1.
\end{equation} 
Thus
\begin{equation}
\begin{split}
\phi(C_3) &= 1 - \phi(C_1) - \phi(C_2) \\
&> 1 - \phi(A \wedge B) - \phi(A^{\perp} \wedge B^{\perp}) \ \ \ (\because \text{Equation (\ref{t7}), and Inequalities (\ref{t4}) and (\ref{t5.9})}) \\
&= \phi(A \wedge B^{\perp})+\phi(A^{\perp} \wedge B) \ \ \ (\because \text{Equation (\ref{t0.9})}) \\
&\geq 0.
\end{split}
\end{equation}
By Equation (\ref{t7})
\begin{equation}
\frac{\phi(A \wedge C_3)}{\phi(C_3)}\frac{\phi(B \wedge C_3)}{\phi(C_3)}=\frac{\phi(A \wedge B \wedge C_3)}{\phi(C_3)}
\end{equation}
since
\begin{equation}
\phi(C_3)=1 - \phi(C_1) - \phi(C_2)=\phi(C_1^{\perp}) - \phi(C_2)
\end{equation}
and
\begin{equation}
\label{t8}
X \wedge C_3 = X \wedge C_1^{\perp} \wedge C_2^{\perp} = X \wedge C_1^{\perp}
\end{equation}
where $X=A, B, A \wedge B$.

By Lemma \ref{logical_independence_property}
\begin{equation}
\label{t10}
\phi(A \vee B) > \phi(A), \ \ \ \ \phi(A \vee B) > \phi(B)
\end{equation}
since $\phi$ is faithful, and $A$ and $B$ are compatible and logical independent.

Equation (\ref{t8}) and Inequality (\ref{t4}) imply
\begin{equation}
\begin{split}
&\phi(A \wedge C_3)=\phi(A \wedge C_1^{\perp})=\phi(A)-\phi(C_1) > 0 \\
&\phi(B \wedge C_3)=\phi(B \wedge C_1^{\perp})=\phi(B)-\phi(C_1) > 0,
\end{split}
\end{equation}
and Inequalities (\ref{t5.9}) and (\ref{t10}), and Equation (\ref{t7}) entail
\begin{equation}
\begin{split}
&\phi(C_2) \leq \phi(A^{\perp} \wedge B^{\perp}) = 1 - \phi(A \vee B) < 1 - \phi(A) \\
&\phi(C_2) \leq \phi(A^{\perp} \wedge B^{\perp}) = 1 - \phi(A \vee B) < 1 - \phi(B).
\end{split}
\end{equation}
Thus
\begin{equation}
\label{c3-property}
\begin{split}
&0< \frac{\phi(A \wedge C_3)}{\phi(C_3)}=\frac{\phi(A) - \phi(C_1)}{1- \phi(C_2) - \phi(C_1)} < \frac{1- \phi(C_2) - \phi(C_1)}{1- \phi(C_2) - \phi(C_1)} = 1 \\ 
&0< \frac{\phi(B \wedge C_3)}{\phi(C_3)} =\frac{\phi(B) - \phi(C_1)}{1- \phi(C_2) - \phi(C_1)} < \frac{1- \phi(C_2) - \phi(C_1)}{1- \phi(C_2) - \phi(C_1)} = 1.
\end{split}
\end{equation}
By Equations (\ref{c1-property}) and (\ref{c2-property}), and Inequality (\ref{c3-property})
\begin{equation}
\left( \frac{\phi(A \wedge C_i)}{\phi(C_i)} - \frac{\phi(A \wedge C_j)}{\phi(C_j)} \right) \left(\frac{\phi(B \wedge C_i)}{\phi(C_i)} - \frac{\phi(B \wedge C_j)}{\phi(C_j)} \right) > 0 \ \ \ (i \neq j) .
\end{equation}
\end{proof}

By Proposition \ref{logical-independence-rccs-proposition} and Theorem \ref{size-3}, we get the following corollary.

\begin{corollary}
\label{corollary_size-3}
Let $\mathcal{L}$ be a $\sigma$-complete atomless orthomodular lattice, let $\phi$ be a faithful $\sigma$-additive probability measure on $\mathcal{L}$, and let $A$ and $B$ be compatible elements in $\mathcal{L}$ such that $\phi(A \wedge B) > \phi(A)\phi(B)$. There exists a Reichenbachian common cause system of size 3 for the correlation between $A$ and $B$ if and only if $A$ and $B$ are logical independent. 
\end{corollary}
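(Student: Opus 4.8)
The plan is to obtain this corollary as an immediate consequence of the two main results already established, namely Theorem \ref{logical-independence-rccs-proposition} and Theorem \ref{size-3}; no further lattice-theoretic or measure-theoretic argument should be required.

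For the forward implication I would assume that a Reichenbachian common cause system of size $3$ exists for the correlation between $A$ and $B$. Since $A$ and $B$ are compatible elements with $\phi(A \wedge B) > \phi(A)\phi(B)$, and $3 \geq 3$, Theorem \ref{logical-independence-rccs-proposition} applies verbatim and yields that $A$ and $B$ are logical independent.

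For the reverse implication I would assume that $A$ and $B$ are logical independent. They are also compatible and, by hypothesis, correlated with respect to $\phi$, i.e. $\phi(A \wedge B) > \phi(A)\phi(B)$. Because $\mathcal{L}$ is a $\sigma$-complete atomless orthomodular lattice and $\phi$ is a faithful $\sigma$-additive probability measure, Theorem \ref{size-3} says that $(\mathcal{L}, \phi)$ is causally $3$-closed with respect to logical independence; unwinding the definition of causal $3$-closedness for the correlated, compatible, logical independent pair $A, B$ then produces the desired Reichenbachian common cause system of size $3$.

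The only point that needs attention — and it is entirely routine — is checking that the hypotheses of the corollary line up with the hypotheses of each theorem: compatibility together with the strict inequality $\phi(A \wedge B) > \phi(A)\phi(B)$ feeds Theorem \ref{logical-independence-rccs-proposition}, while $\sigma$-completeness, atomlessness, and faithfulness of the $\sigma$-additive measure feed Theorem \ref{size-3}. Since both collections of assumptions are literally among the standing hypotheses of the corollary, there is no genuine obstacle, and the proof amounts to a two-line invocation of the preceding results.
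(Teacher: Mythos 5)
Your proposal is correct and matches the paper exactly: the paper derives Corollary \ref{corollary_size-3} as an immediate consequence of Theorem \ref{logical-independence-rccs-proposition} (forward direction) and Theorem \ref{size-3} (reverse direction), with no additional argument. Your check that the hypotheses line up is the only content required, and you have supplied it.
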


This corollary shows that the existence of a Reichenbachian common cause system of size 3 for the correlation between $A$ and $B$ is equivalent to logical independence of $A$ and $B$ in the case of a general probability theory which is represented by a $\sigma$-complete atomless orthomodular lattice with a faithful $\sigma$-additive probability measure.

Theorem \ref{size-3} and Corollary \ref{corollary_size-3} can apply to AQFT because typical local algebras in algebraic quantum field theory are type III von Neumann algebras and lattices of projections of type II or type III von Neumann algebras is atomless orthomodular lattices. In algebraic quantum field theory, each bounded open region $\mathcal{O}$ in the Minkowski space is associated with a von Neumann algebra $\mathfrak{N}(\mathcal{O})$ on a Hilbert space $\mathcal{H}$. Such a von Neumann algebra is called a local algebra. We say that bounded open regions $\mathcal{O}_1$ and $\mathcal{O}_2$ are strictly space-like separated if there is a neighborhood $\mathcal{V}$ of the origin of the Minkowski space such that $\mathcal{O}_1 + \mathcal{V}$ and $\mathcal{O}_2$ are space-like separated.

If $\mathcal{O}_1$ and $\mathcal{O}_2$ are strictly space-like separated bounded open regions in the Minkowski space, local algebras $\mathfrak{N}(\mathcal{O}_1)$ and $\mathfrak{N}(\mathcal{O}_2)$ are logical independent under usual axioms \cite[Theorem 1.12.3]{baumgartel1995operatoralgebraic}. Thus there exists a Reichenbachian common cause system of size 3 for any correlation between $\mathfrak{N}(\mathcal{O}_1)$ and $\mathfrak{N}(\mathcal{O}_2)$ by Corollary \ref{corollary_size-3}. 

On the other hand, it is known that there is a normal state on $\mathfrak{N}(\mathcal{O}_1) \vee \mathfrak{N}(\mathcal{O}_2)$ which violates Clauser-Horne-Shimony-Holt Bell inequality \citep{halvorson2000generic, kitajima2013epr, landau1987violation, summers1987bell, summers1987abell, summers1987maximal, summers1988maximal}. Clauser-Horne Bell inequality is also violated in some normal state on $\mathfrak{N}(\mathcal{O}_1) \vee \mathfrak{N}(\mathcal{O}_2)$\footnote{\cite{stergiou2012two} discussed a relation between Clauser-Horne-Shimony-Holt Bell inequality and Clauser-Horne Bell inequality in AQFT.}. Since $\mathfrak{N}(\mathcal{O}_1)$ and $\mathfrak{N}(\mathcal{O}_2)$ are not abelian, there are partial isometries $V_{1} \in \mathfrak{N}(\mathcal{O}_1)$ and $V_{2} \in \mathfrak{N}(\mathcal{O}_2)$ such that $V_1^2=V_2^2=0$ \citep{landau1987violation}. Since $\mathfrak{N}_1$ and $\mathfrak{N}_2$ are logical independent, $(V_1^*V_1)(V_2^*V_2) \neq 0$. Thus there is a unit vector $\Psi$ such that $(V_1^*V_1)(V_2^*V_2)\Psi \neq 0$. Let
\begin{equation}
 \Phi := \frac{1}{\sqrt{2}} (\Psi + V_1V_2 \Psi), 
\end{equation}
\begin{equation}
\begin{split}
&A_1 := V_1^*V_1, \ \ \ \ \ B_1:=\frac{3}{4}V_1^*V_1+\frac{1}{4}V_1V_1^*+\frac{\sqrt{3}}{4}(V_1+V_1^*), \\
&A_2 := V_2^*V_2, \ \ \ \ \ B_2:=\frac{3}{4}V_2^*V_2+\frac{1}{4}V_2V_2^* - \frac{\sqrt{3}}{4}(V_2+V_2^*).
\end{split}
\end{equation}
$A_1$, $B_1$, $A_2$, and $B_2$ are projections.
Let $\phi$ be a vector state of $\mathfrak{N}(\mathcal{O}_1) \vee \mathfrak{N}(\mathcal{O}_2)$ induced by $\Phi$. Then
\begin{equation}
\label{bell_i}
\phi(A_{1})+\phi(A_{2})+\phi(B_{1}B_{2})-\phi(A_{1}A_{2})-\phi(B_{1}A_{2})-\phi(A_{1}B_{2}) = -\frac{1}{8} < 0.
\end{equation}

Suppose that there is a Reichenbachian `common' common cause system $\{ C_1, \dots, C_{n} \}$ for all four pairs $(A_1, A_2)$, $(A_1, B_2)$, $(B_1, A_2)$, and $(B_1, B_2)$ \citep{redei1997reichenbach, hofer1999reichenbach, redei2002local}. In other words,
\begin{equation}
\label{rccy}
\begin{split}
&\frac{\phi(A_1 A_2 C_j)}{\phi(C_j)}=\frac{\phi(A_1 C_j)}{\phi(C_j)}\frac{\phi(A_2 C_j)}{\phi(C_j)}, \\
&\frac{\phi(A_1 B_2 C_j)}{\phi(C_j)}=\frac{\phi(A_1 C_j)}{\phi(C_j)}\frac{\phi(B_2 C_j)}{\phi(C_j)}, \\
&\frac{\phi(B_1 A_2 C_j)}{\phi(C_j)}=\frac{\phi(B_1 C_j)}{\phi(C_j)}\frac{\phi(A_2 C_j)}{\phi(C_j)}, \\
&\frac{\phi(B_1 B_2 C_j)}{\phi(C_j)}=\frac{\phi(B_1 C_j)}{\phi(C_j)}\frac{\phi(B_2 C_j)}{\phi(C_j)}, \\
&[A_1,C_j]=[A_2,C_j]=[B_1,C_j]=[B_2,C_j]=0, \\
&\sum_{i=1}^n C_i = I,
\end{split}
\end{equation}
for any $j \in \{1, \dots, n \}$.

Then
\begin{equation}
0 \leq \phi(A_{1})+\phi(A_{2})+\phi(B_{1}B_{2})-\phi(A_{1}A_{2})-\phi(B_{1}A_{2})-\phi(A_{1}B_{2}) \leq 1
\end{equation}
since
\begin{equation}
\begin{split}
&\phi(A_{1})+\phi(A_{2})+\phi(B_{1}B_{2})-\phi(A_{1}A_{2})-\phi(B_{1}A_{2})-\phi(A_{1}B_{2}) \\
&=\phi \left( \left( A_{1}+A_{2}+B_{1}B_{2}-A_{1}A_{2}-B_{1}A_{2}-A_{1}B_{2} \right) \left( \sum_{i=1}^{n}C_{i} \right) \right) \\
&=\sum_{i=1}^{n} \left( \frac{\phi(A_{1}C_{i})}{\phi(C_{i})} +\frac{\phi(A_{2}C_{i})}{\phi(C_{i})} +\frac{\phi(B_{1}B_{2}C_{i})}{\phi(C_{i})} -\frac{\phi(A_{1}A_{2}C_{i})}{\phi(C_{i})}-\frac{\phi(B_{1}A_{2}C_{i})}{\phi(C_{i})}-\frac{\phi(A_{1}B_{2}C_{i})}{\phi(C_{i})}
\right)\phi(C_{i}) \\
&=\sum_{i=1}^{n} \huge( \frac{\phi(A_{1}C_{i})}{\phi(C_{i})} +\frac{\phi(A_{2}C_{i})}{\phi(C_{i})} +\frac{\phi(B_{1}C_{1})}{\phi({C_{i}})}\frac{\phi(B_{2}C_{i})}{\phi(C_{i})} -\frac{\phi(A_{1}C_{1})}{\phi({C_{i}})}\frac{\phi(A_{2}C_{i})}{\phi(C_{i})} \\
& \ \ \ \ \ \ \ \ \ \ \ \ \ -\frac{\phi(B_{1}C_{1})}{\phi({C_{i}})}\frac{\phi(A_{2}C_{i})}{\phi(C_{i})}-\frac{\phi(A_{1}C_{1})}{\phi({C_{i}})}\frac{\phi(B_{2}C_{i})}{\phi(C_{i})}
\huge)\phi(C_{i}) \\
\end{split}
\end{equation}
by Equations (\ref{rccy})
and 
\begin{equation}
\begin{split}
 &a_{1}+a_{2}+b_{1}b_{2}-a_{1}a_{2}-b_{1}a_{2}-a_{1}b_{2} \\
 &=a_{1}(b_{1}(1-a_{2})+(1-b_{1})(1-b_{2}))+(1-a_{1})(b_{1}b_{2}+(1-b_{1})a_{2})
 \end{split}
 \end{equation}
for any $a_{1},a_{2},b_{1},b_{2} \in [0,1]$.
It contradicts Inequality (\ref{bell_i}). Thus, there is no Reichenbachian `common' common cause system $\{ C_{1}, \dots, C_{n} \}$ for all four pairs $(A_1, A_2)$, $(A_1, B_2)$, $(B_1, A_2)$, and $(B_1, B_2)$ because Bell's inequality does not hold in $\phi$. It means that the existence of Reichenbachian `common' cause system contradicts the violation of Bell's inequality \citep[Section 9.2]{hofer2013principle}.

On the other hand, the definition of Reichenbachian common cause systems does not require that it is common for all four pairs (Definition \ref{rccs-definition}). Thus, the existence of a Reichenbachian common cause system can be compatible with the violation of Bell's inequality in AQFT, and Theorem \ref{size-3} and Corollary \ref{corollary_size-3} show that there are Reichenbachian common cause systems of size 3 for any correlation between $\mathfrak{N}(\mathcal{O}_1)$ and $\mathfrak{N}(\mathcal{O}_2)$ in spite of the violation of Bell's inequality.

\section{Conclusion}

In the present paper, we examined Reichenbachian common cause systems in general probability theories which include classical probability theories and quantum probability theories. Especially, the following two problems were investigated:
\begin{itemize}
\item Is logical independence necessary for the existence of a Reichenbachian common cause system?
\item Is there a general probability theory which is causal $n$-closed for some natural number $n$?
\end{itemize}

In Theorem \ref{logical-independence-rccs-proposition}, it was shown that there is no Reichenbachian common cause system for a correlation between two events if they are not logical independent. Therefore logical independence is necessary for the existence of a Reichenbachian common cause system. In Theorem \ref{size-3} it was shown that a general probability theory which is represented by an atomless orthomodular lattice with a faithful $\sigma$-additive probability measure is causally $3$-closed. It remains open, however, whether this lattice is causally $n$-closed for any natural number $n$.

Theorem \ref{size-3} and Corollary \ref{corollary_size-3} can apply to AQFT because typical local algebras in algebraic quantum field theory are type III von Neumann algebras and lattices of projections of type II or type III von Neumann algebras is atomless orthomodular lattices. There exists a Reichenbachian common cause system of size 3 for any correlation between two space-like separated regions by Corollary \ref{corollary_size-3}. On the other hand, AQFT predicts many states which violate Bell's inequality between two events associated with space-like separated spacetime regions in Minkowski spacetime \citep{halvorson2000generic, kitajima2013epr, landau1987violation, summers1987bell, summers1987abell, summers1987maximal, summers1988maximal}. After Corollary \ref{corollary_size-3}, we pointed out that this violation of Bell's inequality can be compatible with a Reichenbachian common cause system although it contradicts a `common' common cause system.

\cite{redei2002local, redei2007remarks} showed that a Reichenbachian common cause for any correlation exists in the union of the backward light cones of $\mathcal{O}_1$ and $\mathcal{O}_2$ in AQFT. By this result and Theorem \ref{size-3}, it can be shown that a Reichenbachian common cause system of size 3 for any correlation exists in the union of the backward light cones of $\mathcal{O}_1$ and $\mathcal{O}_2$. However, the problem about Reichenbachian commmon cause systems in AQFT is largely open.

\section*{Acknowledgments}
The authors thank Chrysovalantis Stergiou for a helpful comment for an earlier version of this paper. The author is supported by the JSPS KAKENHI No.15K01123 and No.23701009.

\bibliographystyle{chicago}
\bibliography{kitajima}

\end{document}